\newif\ifnotes
\title{Explicit Almost-Optimal $\eps$-Balanced Codes \\
via Free Expander Walks}
\author{Jun-Ting Hsieh\thanks{MIT. \texttt{juntingh@mit.edu}.}
\and Sidhanth Mohanty\thanks{Northwestern University. \texttt{sidhanth.mohanty@northwestern.edu}.}
\and Rachel Yun Zhang\thanks{MIT.  \texttt{rachelyz@mit.edu}. Supported by the Schwarzman College of Computing Future Research Fellowship supported by Google.}}
\date{\today}
\begin{document}

\sloppy
\maketitle
\begin{abstract}
We study the problem of constructing explicit binary codes whose rate and distance match the Gilbert--Varshamov bound in the low-rate, high-distance regime.
In 2017, Ta-Shma \cite{TaS17} gave an explicit family of codes where every pair of codewords has relative distance $\frac{1-\varepsilon}{2}$, with rate $\Omega(\varepsilon^{2+o(1)})$, matching the Gilbert--Varshamov bound up to a factor of $\varepsilon^{o(1)}$.
Ta-Shma's construction was based on starting with a good code and amplifying its bias with walks arising from the $s$-wide-replacement product.

In this work, we give a simpler almost-optimal construction, based on what we call \emph{free expander walks}---ordinary expander walks where each step is taken on a distinct expander from a carefully chosen sequence.
This sequence of expanders is derived from the construction of near-$X$-Ramanujan graphs due to O'Donnell and Wu \cite{OW20}. We additionally discuss some additional applications of near-$X$-Ramanujan graphs to ``on average'' lossless expansion and rotating expanders.
\end{abstract}

\thispagestyle{empty}
\setcounter{page}{0}

\newpage



\section{Introduction}

A longstanding and fundamental open problem in coding theory is to construct binary codes with the best possible tradeoffs between rate and distance. 
In this work, we seek to understand the best possible rate achievable by a code family whose relative distance approaches $\frac12$. The classical Gilbert--Varshamov (GV) bound guarantees the existence of a binary code family with relative distance $\ge \frac12 - \eps$ and rate $\Omega(\eps^2)$ --- indeed, a random linear code satisfies such a property; this is tight up to a $\log\frac{1}{\eps}$ factor due to the lower bound of \cite{MRRW77} (see also \cite{Alo09}). However, constructing \emph{explicit} codes meeting this rate-distance tradeoff is not yet known and represents an important open problem.



Following~\cite{TaS17}, we seek to construct what are known as \emph{$\eps$-balanced codes}, where the relative distance between \emph{any} pair of codewords is in the interval $\bracks*{ \frac{1-\eps}{2}, \frac{1+\eps}{2} }$. Notice that the property of being $\eps$-balanced is more restrictive than having relative distance $\ge \frac12 - \eps$, which simply requires that the relative distance between every pair of codewords is bounded below by $ \frac12 - \eps$. 

In the context of error-correction, explicit constructions of $\eps$-balanced codes often have structure that plays well with designing fast decoding algorithms \cite{AJQST20,JQST20,JST21}, a feature that random linear codes do not enjoy.
In pseudorandomness, constructions of $\eps$-balanced codes immediately give constructions of \emph{almost $k$-wise uniform distribution} \cite{NN90}, which is an important primitive in a myriad of applications: derandomization \cite{NN90}, probabilistically checkable proofs \cite{BGHSV04}, constructing PRGs for restricted classes of functions \cite{GMRTV12}, expander constructions \cite{MOP20,OW20}, and constructing explicit quantum codes \cite{JMOPT22} to name a few.

Obtaining explicit constructions of $\eps$-balanced codes achieving the GV bound is an active thread in coding and pseudorandomness. 
One popular approach for constructing good codes is to \emph{concatenate} a large alphabet code with a small binary code~\cite{Justensen72}.
In the $\frac{1}{2}-\eps$ relative distance regime, explicit codes obtained via concatenation achieve rate $\Omega(\eps^3)$, a result traditionally attributed to Zyablov \cite{Zya71}; see \cite[Exercise 10.3]{GRS12} for an English presentation.
A second approach for obtaining good $\eps$-balanced codes, pioneered by Rozenman and Wigderson and exposited in the lecture notes of Bogdanov \cite{Bog12}, is to amplify bias via expander walks. In the original presentation, such expander walk-based codes obtained only a rate of $\Omega(\eps^4)$. In 2017, the breakthrough work of Ta-Shma~\cite{TaS17} showed how to sparsify these expander walks using a wide replacement product, thereby obtaining an explicit family of $\eps$-balanced codes with rate $\Omega\left( \eps^{2+o(1)} \right)$, matching the GV bound up to a subpolynomial factor in $1/\eps$. The subpolynomial factor was recently improved in~\cite{CohenC25}. 


In this work, we present what we believe is a much simpler explicit construction of a nearly optimal $\eps$-balanced code with rate $\Omega(\eps^{2+o(1)})$.

\begin{theorem} \label{thm:main}
    There is an explicit $\eps$-balanced binary linear code $\calC\subseteq\{ 0, 1 \}^n$ with rate $\Omega\parens*{ \eps^{2+o(1)}  }$.
\end{theorem}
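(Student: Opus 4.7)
My plan is to follow the bias-amplification paradigm of Rozenman--Wigderson and Ta-Shma, but replacing single-expander walks (or Ta-Shma's wide-replacement-product walks) with \emph{free expander walks}. Fix an explicit binary linear base code $C_0 \subseteq \mathbb{F}_2^n$ of constant rate $\rho_0$ and constant bias $\eps_0 < 1$ (e.g., any explicit code with constant distance). Using the near-$X$-Ramanujan graph construction of O'Donnell--Wu, produce a sequence of $d$-regular graphs $G_1, \ldots, G_t$ on vertex set $[n]$, with $d, t$ to be chosen later. A \emph{free walk} is a tuple $w = (v_0, v_1, \ldots, v_t)$ with $(v_{i-1}, v_i) \in E(G_i)$ for each $i$; there are exactly $n d^t$ such walks. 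Define the code $\mathcal{C}$ to have coordinates indexed by free walks, with codeword $c_f[w] = \sum_{i=0}^t f(v_i) \bmod 2$ for each $f \in C_0$. $\mathbb{F}_2$-linearity of $f \mapsto c_f$ makes $\mathcal{C}$ a binary linear code of dimension at least $\dim C_0 = \rho_0 n$ and block length $N = n d^t$.

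For the bias analysis, let $\chi_f(v) = (-1)^{f(v)}$ and let $D = D_{\chi_f}$ be the corresponding diagonal sign matrix. The bias of $c_f$ then equals
\[
\left| \mathbb{E}_w \prod_{i=0}^t \chi_f(v_i) \right| \;=\; \frac{1}{n} \left| \mathbf{1}^\top D P_1 D P_2 D \cdots P_t D \mathbf{1} \right|,
\]
where $P_i$ is the random walk operator of $G_i$. I would bound this by the operator norm of the product of colored walk matrices $D P_i$ across the distinct expanders. The key leverage is that the O'Donnell--Wu construction lets us choose the sequence $G_1,\ldots,G_t$ so that this product has spectral behavior approaching that of the corresponding product in the free (non-commutative) probability limit, where the $P_i$ act as free independent operators on the infinite tree. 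This should yield a per-step decay close to the Ramanujan rate $2\sqrt{d-1}/d$, and hence an overall bias bound of roughly $(\eps_0 + 2\sqrt{d-1}/d)^{\Omega(t)}$.

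Finally, I would set $t = \Theta(\log(1/\eps)/\log d)$ and let $d$ grow slightly superconstantly with $1/\eps$ so that the additive $\eps_0$ term is dominated by the Ramanujan term. Then $d^t = (1/\eps)^{2+o(1)}$, giving rate $\rho_0 / d^t = \Omega(\eps^{2+o(1)})$. The main obstacle is the bias analysis: controlling the norm of a product of colored walk matrices across a sequence of \emph{distinct} expanders is not covered by standard Rozenman--Wigderson or Ta-Shma estimates, and will require invoking the specific spectral structure furnished by O'Donnell--Wu, likely via a trace-method estimate on long non-backtracking walks that carefully tracks the alternating $D$ factors between consecutive $P_i$'s. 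The subpolynomial loss $\eps^{o(1)}$ in rate reflects both the gap between the O'Donnell--Wu near-Ramanujan bound and a true Ramanujan bound, and the need to take $d$ superconstant to suppress the $\eps_0$ contribution.
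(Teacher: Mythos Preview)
Your framework matches the paper's: amplify the bias of an explicit base code via walks that use a different O'Donnell--Wu expander at each step, and let $d$ drift to infinity to absorb lower-order losses. The gap is in the bias analysis. You need to bound $\lvert\langle \mathbf{1}, D P_1 D P_2 \cdots P_t D\mathbf{1}\rangle\rvert$ for a \emph{single fixed} sequence $G_1,\ldots,G_t$ against an \emph{adversarial} sign matrix $D=D_{\chi_f}$ ranging over all nonzero codewords. The appeal to the free-probability limit does not do this: whatever asymptotic freeness the $P_i$ enjoy among themselves says nothing about their joint relationship with the adversarial $D$, and for any fixed sequence there may well be a codeword $f$ (say, one correlated with a top eigenvector of $G_1$) for which some steps give only $\lambda^{1/2}$-per-step decay, exactly as in the single-expander case. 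The O'Donnell--Wu property the paper actually exploits --- the all-signings near-Ramanujan bound $\bigl\|\sum_j \sigma_j A_{H_j}\big|_{\mathbf{1}^\perp}\bigr\|_{\mathrm{op}} \le 2/\sqrt{td}$ --- only yields an \emph{average} over the graph index: for every $y\perp\mathbf{1}$ one gets $\mathbb{E}_{u\in[t]}\lvert\langle \mathbf{1}, D_x A_u y\rangle\rvert \le \tfrac{2}{\sqrt{td}}\|y\|$. It gives no per-sequence control.

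The paper converts this averaging bound into a worst-case-over-codewords statement by changing the \emph{schedule}. Instead of one pass through $G_1,\ldots,G_t$, it takes a walk of length $\ell = \kappa\, t^\kappa\, r$ whose schedule is the concatenation of \emph{every} word in $[t]^\kappa$, repeated $r$ times. Markov's inequality applied to the average above (together with a union bound over the $\binom{\kappa}{2}$ substrings) shows that all but a $\kappa^2/t^{1/4}$ fraction of words $w\in[t]^\kappa$ are ``contractive,'' and for those one proves $\|M_w\|_{\mathrm{op}} \le 2^{\kappa+1}\lambda^{\kappa-1}$; the few bad words are absorbed by the trivial bound $\|M_w\|_{\mathrm{op}}\le 1$. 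Multiplying across all $t^\kappa r$ blocks gives bias $\lambda^{\ell(1-o(1))}$, which is what pushes the rate to $\eps^{2+o(1)}$. This schedule-plus-Markov step is precisely the idea your proposal is missing; committing to one fixed sequence gives no mechanism to beat the $\lambda^{t/2}$ decay of the standard Rozenman--Wigderson analysis, and the rate stalls at $\eps^{4+o(1)}$. (A secondary issue: with your \emph{constant} $\eps_0$, the bound $(\eps_0 + 2/\sqrt d)^{\Omega(t)}$ is dominated by $\eps_0^{\Omega(t)}$, so under $t=\Theta(\log(1/\eps)/\log d)$ the bias does not even reach $\eps$; the paper takes $\eps_0 = 1/\sqrt d$ so that the base-code bias folds into the same parameter $\lambda=\max\{2/\sqrt d,\eps_0\}$.)
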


Our code family is obtained by taking the simple expander walk, with a different expander at each step. While we obtain a worse subpolynomial error term than that achieved by~\cite{TaS17,CohenC25}, we hope that our simplification can lead to progress in both explicit constructions and algorithms.

\begin{remark}
    The work of Jeronimo, Mittal, Roy, and Wigderson \cite{JMRW25} proved that Ta-Shma's construction extends to amplifying bias in the \emph{operator} setting as well, a fact useful in constructing almost-optimal expanders on groups.
    In contrast, it is possible to construct operators for which our construction fails to achieve almost-optimal amplification.
\end{remark}



\subsection{Overview}
    
To motivate our construction, we begin by discussing a construction based on \emph{expander walks} that achieves rate $\Omega(\eps^4)$, which is also the starting point for Ta-Shma's construction.

Given a code $\calC_0 \subseteq \F_2^{n_0}$ with bias $\eps_0$, rate $r_0$, and block length $n_0$, we may consider the code $\calC$ obtained as an \emph{expander walk lift} with block length $n_0 \cdot d^{\ell}$ obtained by taking a $d$-regular $\lambda$-spectral expander $G$, and defining the codewords to be $f(x)_{i_0,\dots,i_{\ell}}\coloneqq x_{i_0} + \dots + x_{i_{\ell}}$ where $x \in \calC_0$ and $i_0,\dots,i_{\ell}$ is a length-$\ell$ walk in $G$.

The bias can be written as $\angles*{\mathbf{1}, D_x (A_G D_x)^{\ell} \mathbf{1} }$, where $A_G$ is the normalized adjacency matrix of $G$, $\mathbf{1}$ is the all-1s vector, $D_x$ is an $n_0\times n_0$ diagonal matrix where the $(i,i)$-th entry is $(-1)^{x_i}$, and $\angles*{\cdot,\cdot}$ is the normalized inner product.
One may use spectral properties of $A_G$ to prove that $\norm*{(A_G D_x)^2}_{\op} \le \lambda$, which in turn implies that the bias is at most $\lambda^{\ell/2}$.
Since we can choose $\lambda$ to be $\Theta\parens*{\frac{1}{\sqrt{d}}}$, one obtains a rate of $\Omega(\eps^4)$.

Another way to visualize the effect of the walk $(A_GD_x)^\ell$ is via the following schematic, in which arrows represent how mass may move along the $\mathbf{1}$ and $\perp$ directions (orthogonal to $\mathbf{1}$) by applications of $D_x$ and $A_G$. The unlabeled black arrows represent an upper bound of $1$, i.e., it's possible that an application of $D_x$ or $A_G$ does not shrink the norm of the vector. The norm of the vector obtained by successive applications of $A_G D_x$ can be thought of as the maximum over all paths through the diagram of the product of all of the factors it passes through. 

\begin{figure}[h!]
    \centering
    \includegraphics[width=0.8\linewidth]{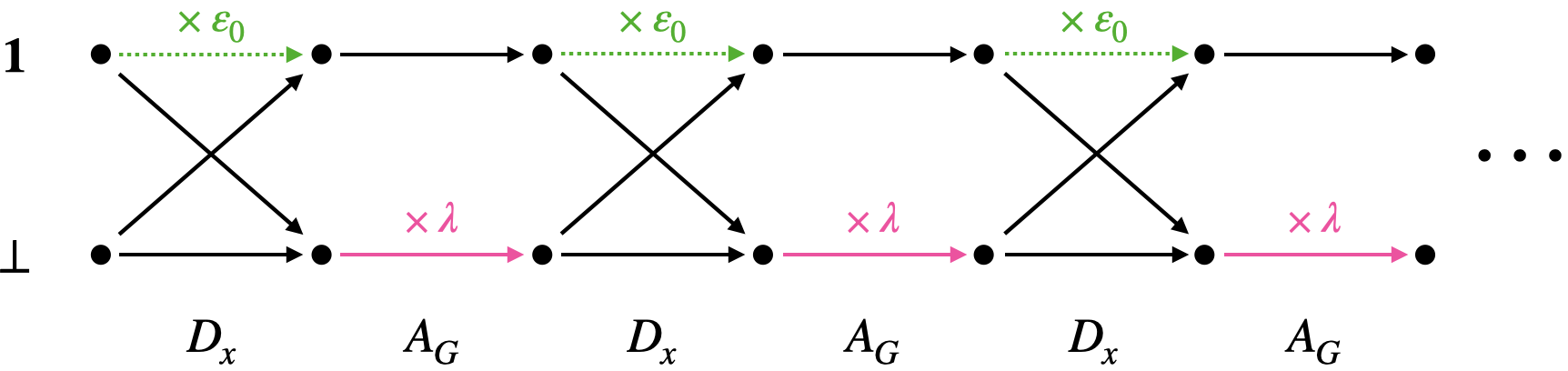}
    \label{fig:flow-digram}
\end{figure}

Consider the following highlighted path that switches between the $\mathbf{1}$ and $\perp$ components on every application of $D_x$. This path only picks up a $\lambda$ factor in every other application of $A_G$, explaining why we may only experience a $\lambda^{\ell/2}$ decay over $\ell$ steps.

\begin{figure}[h!]
    \centering
    \includegraphics[width=0.8\linewidth]{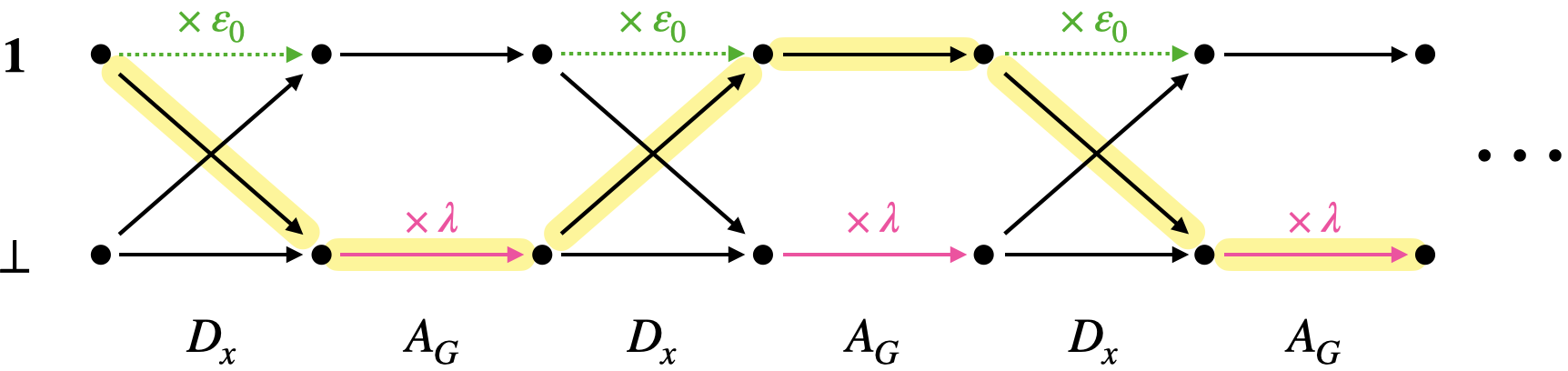}
    \label{fig:flow-digram-highlight}
\end{figure}

It is natural to wonder whether the bound of $\lambda$ for two steps of the walk can be improved to a $\lambda^2$.
Indeed, such an improvement would immediately imply \Cref{thm:main}.
Unfortunately, it is possible to instantiate a graph $G$ along with $x$ such that the bias is genuinely as large as $\lambda^{\ell/2}$. Consider for instance $x$ that is the second eigenvector of $A_G$: the highlighted path precisely describes the evolution of successive applications of $A_G D_x$.



\parhead{Free expander walks.}
In general, the $\lambda^{\ell/2}$ decay is realized if $x$ aligns with an atypically sparse or abnormally dense cut in $G$.
Our way around this is to take each step on a \emph{different} expander.
If we have an ensemble of expanders $H_1,\dots,H_t$ that ``look pseudorandom'' to each other, then no fixed $x$ can align with an abnormal cut in many of the $H_i$, and in fact $x$ should ``look random'' to most of them.
Concretely, we assume an ensemble of $\lambda$-spectral expanders $H_1,\dots,H_t$ satisfying the \emph{all-signings near-Ramanujan property}, i.e.
\[
    \max_{\sigma:[t]\to\{\pm1\}} \max_{y:\angles*{y,\mathbf{1}} = 0} \E_{\bi\sim[t]} \frac{\angles*{y, \sigma_{\bi} A_{H_{\bi}}y}}{\angles*{y,y}} \le \frac{\lambda}{\sqrt{t}}\mper
\]
Explicit constructions of such ensembles are implied by the work of O'Donnell and Wu \cite{OW20}. 

Given such an ensemble of expanders, it is not too hard to show that for most graphs $H_i$, most of the mass in the $\mathbf{1}$ direction cannot return to the $\mathbf{1}$ direction after just one step according to $H_i$. More generally, we show that even for larger numbers of steps, most sequences $H_{i_1}, \dots, H_{i_\arr}$ will not cause most of the mass to return to the $\mathbf{1}$ direction. Precisely, we show that 
for any given $x$, a random word $\bw$ in $[t]^{\arr}$ of not-too-large length $\arr$ satisfies $\norm*{ A_{H_{\bw_\arr}} D_x \cdots A_{H_{\bw_1}} D_x }_{\op} \le \lambda^{\arr-1}$ with high probability.
Thus, if we perform an expander walk whose schedule includes every word in $[t]^{\arr}$, then we obtain an almost-optimal spectral norm bound on most words. 

\subsection{Organization}
In \Cref{sec:construction-and-analysis}, we present the construction based on ensembles satisfying the all-signings near-Ramanujan property, along with the proof of \Cref{thm:main}.

In \Cref{sec:free-expanders}, we give a brief description of the construction of ``free expanders'' by O'Donnell and Wu \cite{OW20} and how the ensembles we use arise as a special case.
We also show a few other properties of these ensembles and some applications of their result, including ``on-average'' lossless expansion and rotating expanders.
These applications are of independent interest, and we hope that they draw further attention to these objects.

\section{Almost-Optimal \texorpdfstring{$\eps$}{epsilon}-Balanced Codes}
\label{sec:construction-and-analysis}

\subsection{Ingredients and Notation}


In this paper, we find it more convenient to present our definitions and proofs for binary codes in the $\{\pm1\}$-basis, instead of the $\{0,1\}$-basis.
In this $\{\pm1\}$-basis, binary linear codes correspond to \emph{multiplicative} subspaces of $\{\pm1\}^n$.
One can convert from the $\{0,1\}$-basis (and back) via the map $x \in \{0,1\}^n \longleftrightarrow ( (-1)^{x_i} )_{i \in [n]} \in \{\pm1\}^n$.


\begin{definition}[Binary codes]
    A \emph{binary code} is a subset $\calC \subseteq \{ \pm 1 \}^n$. It is said to be \emph{linear} if it is in fact a multiplicative subspace of $\{ \pm 1 \}^n$. Its \emph{dimension} is $\log_2 |\calC|$, and its \emph{rate} is $\frac{\log_2 |\calC|}{n}$. 
\end{definition}

We will be interested in binary linear codes with relative distance between $\frac{1-\eps}{2}$ and $\frac{1+\eps}{2}$. In the multiplicative notation, this corresponds to the code having \emph{bias} $\le \eps$.

\begin{definition}[Bias]
    We will say that a string $x \in \{ \pm 1 \}^n$ has \emph{bias} $\eps$ if $\bias(x) \coloneqq |\bbE_{i \in [n]} x_i| = \eps$.
    We say that a binary linear code $\calC \subseteq \{ \pm 1 \}^n$ has bias $\le \eps$ if for all $x \in \calC \setminus \{ 1^{n} \}$, $\bias(x) \le \eps$.
\end{definition}

\parhead{Base code.}
Let $\calC_0\subseteq\{ \pm 1\}^{n_0}$ be an explicit linear code of block length $n_0$, bias $\le \eps_0 \in (0, 1/4]$, and rate $r_0 = \poly(\eps_0) > 0$.
For concreteness, we can take $\calC_0$ to be the construction of \cite{ABNNR02}. We will later choose $\eps_0$ to depend on our target bias $\eps$.

\parhead{Graphs, matrices, and inner product.}
For a graph $G$, we use $A_G$ to denote its normalized adjacency matrix.
We use $\mathbf{1}$ to refer to the all-ones vector.
For $x\in\{\pm1\}^{n_0}$ we set $D_x=\diag(x)$.
Throughout, we use the \emph{normalized inner product}, for functions $f,g:[n_0]\to\R$, we define $\angles*{f, g} \coloneqq \E_{\bv\sim[n_0]} f(\bv) \cdot g(\bv)$, and write $\norm*{f} \coloneqq \sqrt{\angles*{f, f}}$ for the induced norm.

\parhead{All-signings near-Ramanujan families.}
We now describe the kind of expander family we use for defining our expander walk.
\begin{definition} \label{def:all-signings-expanders}
    We say that a collection of $d$-regular graphs $(H_j)_{j\in[t]}$ on the same vertex set $V$ is an \emph{all-signings near-Ramanujan family} if for every signing $\sigma\in\{\pm1\}^t$, the signed $td$-regular multigraph
    \[
        H(\sigma) \coloneqq \sum_{j=1}^t \sigma_j H_j
    \]
    satisfies $\norm*{ A_{H(\sigma)}|_{\mathbf{1}^{\perp}} }_{\op} \le \frac{2}{\sqrt{td}}$, 
    and further, $\norm*{A_{H_j}|_{\mathbf{1}^{\perp}}}_{\op} \le \frac{2}{\sqrt{d}}$ for all $j\in[t]$.
\end{definition}

Note that a Ramanujan $d$-regular graph $H$ has $\norm*{A_H|_{\mathbf{1}^{\perp}}}_{\op} \leq \frac{2\sqrt{d-1}}{d}$, which is smaller than $\frac{2}{\sqrt{d}}$.
In \Cref{def:all-signings-expanders}, We use the bounds $\frac{2}{\sqrt{td}}$ and $\frac{2}{\sqrt{d}}$ because we only have constructions of near-Ramanujan families (and also for simplicity).

\begin{theorem}[Special case of {\cite[Theorem 10.13]{OW20}}] 
\label{thm:OW} 
    For every $t \ge 1$, and $d\ge 2$, there is an explicit all-signings near-Ramanujan family $\{ G_{n_0} \}_{n_0 \in \bbN}$ where $G_{n_0}$ is on a vertex set $V_{n_0}$ such that $|V_{n_0}| = n_0 + o(n_0)$.
\end{theorem}

\begin{remark}
    For a reader interested in understanding the proof of \Cref{thm:OW}, the simplest self-contained route is to check that the construction of Mohanty, O'Donnell, and Paredes \cite{MOP20} can be adapted to satisfy this property.
    Here we simply appeal to the far-reaching generalization of \cite{MOP20} given by O'Donnell and Wu \cite[Theorem 10.13]{OW20}. 
\end{remark}

\begin{remark}
    For a fixed signing, the signed sum of a collection of independent random graphs is near-Ramanujan with high probability.
    Since there are only constantly many signings, by the union bound, a random ensemble is all-signings near-Ramanujan.
    The construction of \cite{OW20} is obtained by derandomizing the randomized construction, and hence it is natural that it inherits this basic property.
\end{remark}

\subsection{The Construction}
\label{sec:construction}

Fix integers $t\ge 2$ and $d$.
Using \Cref{thm:OW}, we construct an all-signings near-Ramanujan family of $d$-regular graphs $H_1,\dots,H_t$ on $[n_0]$.\footnote{Strictly speaking, for a given $n_0$, we have $|V(H_1,\dots,H_t)| = n_0+o(n_0)$.
However, we can pad the base code with zeros if necessary, so the block length is equal to the number of vertices and incur only a $1+o(1)$ blowup in block length.}
The following definition formally articulates the notion of a free expander walk.
\begin{definition}[Expander walk lift]
    Given a word $W\in[t]^{\ell}$ and $x\in\{\pm1\}^{n_0}$, we define its \emph{$W$-expander walk lift} $f_W(x)\in\{\pm1\}^{n_0\cdot d^{\ell}}$ as follows.
    Let $I_W$ denote the set of all length-$\ell$ walks $(i_0,i_1,\dots,i_{\ell})\in[n_0]^{\ell+1}$, where $i_0\in[n_0]$ and $i_s$ is a neighbor of $i_{s-1}$ in $H_{W[s]}$ for all $s\in[\ell]$.
    The coordinates of $f_W(x)$ are indexed by walks $i = (i_0,i_1,\dots,i_{\ell})\in I_W$, and we set
    \[
        f_W(x)_{i} = x_{i_0} \cdot x_{i_1} \cdots x_{i_{\ell}}\mper
    \]
\end{definition}

For example, a word $W = (1,2,3)$ gives rise to $n_0 d^{3}$ length-3 walks ($i_0,i_1, i_2, i_3)$, where $(i_0, i_1) \in E(H_1)$, $(i_1, i_2) \in E(H_2)$, and $(i_2, i_3) \in E(H_3)$.

We now describe the word $W$ that gives the sequence of expanders we use in our construction.
\begin{definition}[Expander schedule]
    For some integers $\arr$ and $r$, let $\ell \coloneqq \arr \cdot t^\arr \cdot r$ and construct $W^*\in[t]^\ell$ by concatenating all words in $[t]^\arr$ in an arbitrary order, and then repeating this $r$ times; i.e.,
    \[
        W^* = \underbrace{w^{(1)}_1\,\cdots\,w^{(1)}_\arr}_{w^{(1)}\in[t]^\arr}\;\;\Big\Vert\;\;\underbrace{w^{(2)}_1\,\cdots\,w^{(2)}_\arr}_{w^{(2)}\in[t]^\arr}\;\;\Big\Vert\;\;\cdots\;\;\Big\Vert\;\;\underbrace{w^{(r\cdot t^\arr)}_1\,\cdots\,w^{(r\cdot t^\arr)}_\arr}_{w^{(r\cdot t^\arr)}\in[t]^\arr} \,,
    \]
    so $W^*$ has length $\ell$.
\end{definition}

\parhead{Construction of our code and parameter settings.}
Henceforth, we define $f\coloneqq f_{W^*}$, and define our code $\calC$ as the image $f(\calC_0) \subseteq  \{\pm1\}^{n_0 d^{\ell}}$.
Linearity of $\calC$ follows from multiplicativity of $f$.

We will set $t = d^2$, and define $\lambda\coloneqq \max\left\{ \frac{2}{\sqrt{d}}, \eps_0 \right\}$, where $\eps_0$ is the bias of the base code $\calC_0$.

\begin{remark}
    We remind the reader that we are working with codes in the $\{ \pm 1 \}$ basis rather than the standard $\{ 0, 1 \}$ basis. In the $\{ 0, 1 \}$ basis, the expander walk lift code would have codewords $\widehat{f}(\widehat{x})_{(i_0, \dots, i_\ell)} = \widehat{x}_{i_0} \oplus \widehat{x}_{i_1} \oplus \cdots \oplus \widehat{x}_{i_\ell}$ for $\wh{x} \in \{0,1\}^{n_0}$.
\end{remark}



\subsection{Analyzing the Construction}
\label{sec:analysis}
Consider $x \in \calC_0 \backslash \{ \mathbf{1} \}$.
We can write: 
\begin{equation}
\label{eq:bias-formula}
    \bias(f(x)) = \left| \big\langle \mathbf{1}, D_x A_{H_\ell} D_x A_{H_{\ell-1}} \cdots D_x A_{H_1} D_x \mathbf{1}\big\rangle \right|,
\end{equation}
where $D_x = \diag(x)$ and $A_{H_s} = \frac1d \Adj(H_s)$ where $\Adj(H_s) \in \{0,1\}^{n_0 \times n_0}$ denotes the adjacency matrix of $H_s$.
Indeed, starting from the uniform distribution on $[n_0]$, each application of $D_x$ multiplies the current coordinate by $x_i$,
and each $A_{H_s}$ takes a uniform step to a neighbor in $H_s$.

We now introduce some notation to streamline the subsequent proofs.
\begin{itemize}
    \item We use $\Pi_{\mathbf{1}}$ to denote the projection onto the $\mathbf{1}$ direction, and $\Pi_{\perp}$ for the projection onto the subspace orthogonal to $\mathbf{1}$.
    \item We use $A_{j}$ as shorthand for $A_{H_j}$.
    \item Given a sequence of matrices $Q_1,\dots,Q_k$, we will use $\prod_{i=1}^k Q_i$ to denote $Q_k \cdots Q_1$.
    \item Given a word $w\in[t]^{\kappa}$, we define $M_w \coloneqq A_{w_{\arr}} D_x \cdots A_{w_1} D_x$.
\end{itemize}
With this notation in hand, we may write
\[
    \bias(f(x)) = \abs*{ \big\langle \mathbf{1}, D_x M_{w^{(r\cdot t^{\arr})}} \cdots M_{w^{(1)}} \mathbf{1} \big\rangle } \le \prod_{j=1}^{r\cdot t^{\arr}} \norm*{ M_{w^{(j)}} }_{\op} \mper   \numberthis \label{eq:prod-of-norm-bound}
\]
The key technical lemma is the following.
\begin{lemma}   \label{lem:random-sequence-good}
    Let $\bw$ be a uniformly random word drawn from $[t]^{\arr}$, where $t = d^2$.
    Then, for any $x \in \{\pm1\}^{n_0}$ with $\bias(x) \leq \eps_0$, with probability at least $1-\frac{\arr^2}{t^{1/4}}$, we have $\norm*{ M_{\bw} }_{\op} \le 2^{\arr+1} \lambda^{\arr-1}$. Recall that $M_w$ implicitly depends on $x$.
\end{lemma}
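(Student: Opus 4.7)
The plan is to decompose each link $N_j := A_j D_x$ in the block basis of the orthogonal decomposition $\R^{n_0} = \mathrm{span}(\mathbf{1}) \oplus \mathbf{1}^\perp$, with projections $\Pi_0 := \Pi_\mathbf{1}$ and $\Pi_1 := \Pi_\perp$. A direct calculation gives four blocks: $\Pi_0 N_j \Pi_0$ is multiplication by $\alpha := \langle x, \mathbf{1}\rangle$ with $|\alpha| \le \eps_0 \le \lambda$; the ``exit'' $\Pi_1 N_j \Pi_0 : c \mapsto c \cdot A_j x^\perp$ has norm $\le \lambda$; the ``interior'' $\Pi_1 N_j \Pi_1 : v \mapsto A_j|_\perp \Pi_\perp D_x v$ has norm $\le \lambda$ since $\|A_j|_\perp\|_{\op}\le\lambda$; and the ``return'' $\Pi_0 N_j \Pi_1 : v \mapsto \langle x^\perp, v\rangle \, \mathbf{1}$ has norm $\le 1$. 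Two features are critical: the return block is the only one whose norm might exceed $\lambda$, and it is both \emph{rank-one} and \emph{independent of $j$}.

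Expanding the product gives $M_\bw = \sum_{p \in \{0,1\}^{\arr+1}} T_p(\bw)$, where $T_p(\bw) = \Pi_{p_\arr} N_{\bw_\arr} \Pi_{p_{\arr-1}} \cdots \Pi_{p_1} N_{\bw_1} \Pi_{p_0}$. For a path with returns at positions $r_1 < \cdots < r_k$ (i.e.\ $p_{r_i - 1}=1$, $p_{r_i}=0$), the naive block-wise bound gives $\|T_p(\bw)\|_{\op} \le \lambda^{\arr - k}$, which is as weak as $\lambda^{\arr/2}$ in the worst case. The rank-one structure of the return block forces $T_p(\bw)$ to \emph{factor through scalars}: between two consecutive returns at $r_{i-1}$ and $r_i$, the operator pushes through a scalar coefficient of the form $\langle x^\perp, Q_i \mathbf{1}\rangle$, where $Q_i$ is a $V_0 \to V_1$ operator built from $A_{\bw_{r_{i-1}+1}}, \dots, A_{\bw_{r_i - 1}}$ interleaved with $D_x$'s.

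Next, I would control each such scalar using the all-signings near-Ramanujan property. Writing the $1/t$-average over $\bw$ and choosing $\sigma_j \in \{\pm 1\}$ to match signs, for any fixed $y \in \mathbf{1}^\perp$ with $\|y\| \le 1$,
\[
\E_{\bw \in [t]} \bigl|\langle x^\perp, A_\bw y\rangle\bigr| \;=\; \tfrac{1}{t}\sum_j \bigl|\langle x^\perp, A_j y\rangle\bigr| \;\le\; \tfrac{1}{t}\,\Bigl\|\sum_j \sigma_j A_j\big|_\perp\Bigr\|_{\op} \;\le\; \tfrac{2}{\sqrt{td}} \;=\; O(\lambda^3)
\]
using $t = d^2$ and $\lambda \ge 2/\sqrt d$. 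By Markov, w.p. $\ge 1 - O(1/t^{1/4})$ the inner product drops to $\le \lambda^2$, saving a factor of $\lambda$ over the deterministic bound $\|A_\bw|_\perp\| \cdot \|y\| \le \lambda$. Applied segment-by-segment---treating $\bw_{r_i - 1}$ as ``fresh'' randomness conditioned on the earlier coordinates that build $y$---this lets each return segment contribute an extra $\lambda$, upgrading the per-path bound from $\lambda^{\arr - k}$ to $\lambda^{\arr - 1}$.

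The final step is the union bound. The crucial point is that the ``bad events'' (those scalars exceeding the threshold) are indexed not by paths but by ordered pairs of positions $(s, s') \in [\arr]^2$ serving as segment endpoints, of which there are $\le \arr^2$. Each such event has probability $\le 1/t^{1/4}$, so the total failure probability is $\le \arr^2/t^{1/4}$; on the good event, summing $\|T_p\|_{\op} \le \lambda^{\arr - 1}$ over the $2^{\arr+1}$ paths gives $\|M_\bw\|_{\op} \le 2^{\arr+1}\lambda^{\arr - 1}$. The hard part will be setting up the history-dependence correctly: the ``fresh randomness'' argument for each segment relies on conditioning on everything that determines $y$, and this conditioning must be choreographed so that a single $\arr^2$-sized family of events simultaneously bounds the scalars appearing across all $2^{\arr+1}$ paths. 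Getting the dependence structure right---and verifying that the $\|y\|$ that appears in the Markov step is bounded by the correct power of $\lambda$ coming from the inductive hypothesis on earlier segments---is the main technical work.
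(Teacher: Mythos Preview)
Your plan is correct and essentially identical to the paper's: the paper also splits into the $\mathbf{1}/\mathbf{1}^\perp$ blocks, bounds the return scalar $|\langle \mathbf{1}, D_x A_{\bw_{s-1}} y\rangle|$ by $\lambda^2\|y\|$ via the all-signings property plus Markov, and union-bounds over the $\binom{\arr}{2}$ pairs $(r,s)$---packaging this as a ``contractive sequence'' condition and then deducing the norm bound by induction on $\arr$ rather than your direct $2^{\arr+1}$-path sum. Your worry about the dependence structure is unwarranted: for each fixed pair $(r,s)$ the vector $y$ is a deterministic function of $\bw_r,\ldots,\bw_{s-2}$, so conditioning on these makes the bad event over the fresh coordinate $\bw_{s-1}$ have probability $\le t^{-1/4}$ regardless, and a plain union bound over pairs finishes it---the same $\binom{\arr}{2}$ events cover every path because the scalar in any segment is indexed exactly by its $(m,r_i)$ pair.
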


For the remaining $\le \frac{\arr^2}{t^{1/4}}$ fraction of the $w \in [t]^\arr$ not satisfying the above bound on $\norm{M_w}_{\op}$, we use the following trivial bound.

\begin{observation} \label{lem:triv-bound}
    For every $w\in[t]^{\arr}$, we have $\norm*{M_w}_{\op} \le 1$.
\end{observation}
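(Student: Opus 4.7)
The plan is to bound $\|M_w\|_{\op}$ by submultiplicativity, after noting that both types of factors appearing in $M_w$ are contractions (in fact, $D_x$ is an isometry).

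First, I would observe that $D_x$ has operator norm exactly $1$ with respect to the normalized inner product: for any $f:[n_0]\to\R$,
\[
    \|D_x f\|^2 = \E_{\bv\sim[n_0]} (x_{\bv} f(\bv))^2 = \E_{\bv\sim[n_0]} f(\bv)^2 = \|f\|^2,
\]
since $x_v^2 = 1$ for every $v\in[n_0]$. Hence $\|D_x\|_{\op}=1$.

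Next, I would observe that for each $j\in[t]$, $A_j = A_{H_j}$ is the normalized adjacency matrix of a $d$-regular graph, so it is symmetric and doubly stochastic. Its eigenvalues therefore lie in $[-1,1]$, giving $\|A_j\|_{\op} \le 1$. (Equivalently, one can cite the all-signings near-Ramanujan assumption only for the restriction to $\mathbf{1}^\perp$, but the bound on $\mathbf{1}$ is just the trivial fact that $A_j \mathbf{1} = \mathbf{1}$.) The normalization of the inner product does not affect the operator norm since it is a global scalar.

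Finally, by submultiplicativity of $\|\cdot\|_{\op}$,
\[
    \|M_w\|_{\op} = \|A_{w_{\arr}} D_x \cdots A_{w_1} D_x\|_{\op} \le \prod_{s=1}^{\arr} \|A_{w_s}\|_{\op}\,\|D_x\|_{\op} \le 1.
\]
There is no real obstacle here; this is a warm-up observation whose sole purpose is to let the main argument control the contribution of the ``bad'' words $w^{(j)}$ (those on which \Cref{lem:random-sequence-good} fails) by the trivial bound $1$, so that only the typical words need to supply the exponential decay.
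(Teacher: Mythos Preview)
Your argument is correct and is exactly the paper's approach: the paper's proof is the one-line remark that $M_w$ is a product of matrices each of operator norm at most $1$, combined with submultiplicativity. You have simply spelled out why each factor has operator norm at most $1$, which is fine.
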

\begin{proof}
    This is immediate from combining sub-multiplicativity of the operator norm with the fact that $M_w$ is a product of matrices with operator norm at most $1$.
\end{proof}

As an immediate corollary of \Cref{lem:random-sequence-good} and \Cref{lem:triv-bound}, we obtain the following.
\begin{lemma}   \label{lem:bias-bound}
    Let $x\in\calC_0\setminus\{\mathbf{1}\}$ of bias $\eps_0$.
    The bias of $f(x)$ is at most
    \[
        \lambda^{ \ell \cdot \parens*{1 - \frac{\arr^2}{t^{1/4}} - \frac{1}{\arr} } } \cdot 2^{\ell\cdot\frac{\arr+1}{\arr}}\mper
    \]
\end{lemma}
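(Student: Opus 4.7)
The plan is to convert the probabilistic statement of \Cref{lem:random-sequence-good} into a deterministic counting statement about the schedule $W^*$, and then plug into the product-of-norms bound from \eqref{eq:prod-of-norm-bound}, extended in the obvious way over all $r t^{\arr}$ blocks of $W^*$.

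First, I would fix $x\in\calC_0\setminus\{\mathbf{1}\}$ and call a word $w\in[t]^{\arr}$ \emph{good} (for $x$) if $\norm*{M_w}_{\op}\le 2^{\arr+1}\lambda^{\arr-1}$. Since \Cref{lem:random-sequence-good} says a uniformly random $\bw$ is good with probability at least $1-\arr^2/t^{1/4}$, at least this fraction of the $t^{\arr}$ words in $[t]^{\arr}$ must be good. Second, I would exploit the structure of $W^*$: it consists of $r$ sweeps through every word in $[t]^{\arr}$ exactly once, so the number $N_{\mathrm{good}}$ of good blocks in $W^*$ satisfies
\[
    \frac{\ell}{\arr}\parens*{1-\frac{\arr^2}{t^{1/4}}} \;\le\; N_{\mathrm{good}} \;\le\; \frac{\ell}{\arr},
\]
using $\ell=\arr\cdot t^{\arr}\cdot r$. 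Third, I would apply \eqref{eq:prod-of-norm-bound}, bounding $\norm*{M_{w^{(j)}}}_{\op}$ by $2^{\arr+1}\lambda^{\arr-1}$ on good blocks and by $1$ on bad blocks via \Cref{lem:triv-bound}, to obtain
\[
    \bias(f(x)) \;\le\; (2^{\arr+1}\lambda^{\arr-1})^{N_{\mathrm{good}}} \;=\; 2^{(\arr+1)N_{\mathrm{good}}} \cdot \lambda^{(\arr-1)N_{\mathrm{good}}}.
\]

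To finish, I would separately handle the two factors, being careful about the direction of inequalities. For the $2$-factor I use the upper bound on $N_{\mathrm{good}}$, obtaining $2^{(\arr+1)N_{\mathrm{good}}}\le 2^{\ell(\arr+1)/\arr}$. For the $\lambda$-factor I use the lower bound, computing $(\arr-1)N_{\mathrm{good}}\ge \ell(1-\tfrac{1}{\arr})(1-\tfrac{\arr^2}{t^{1/4}})\ge \ell(1-\tfrac{1}{\arr}-\tfrac{\arr^2}{t^{1/4}})$, and noting that $\lambda\le 1$ makes $a\mapsto\lambda^a$ decreasing, so a larger exponent gives a smaller value; hence $\lambda^{(\arr-1)N_{\mathrm{good}}}\le \lambda^{\ell(1-1/\arr-\arr^2/t^{1/4})}$. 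Multiplying the two bounds gives precisely the claimed expression.

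The heavy lifting has already been done by \Cref{lem:random-sequence-good}; the remaining work is just bookkeeping. The only mildly delicate point is tracking which direction of $N_{\mathrm{good}}$-inequality to use on which factor --- the upper bound for the $2$-factor and the lower bound for the $\lambda$-factor --- which is justified by $\lambda\le 1$. (If instead $2^{\arr+1}\lambda^{\arr-1}>1$, one would need to argue slightly differently, but in that regime the lemma itself is vacuous and the ultimate parameter settings rule it out.)
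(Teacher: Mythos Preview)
Your proposal is correct and is exactly the argument the paper has in mind. The paper does not spell out a proof of this lemma, declaring it ``an immediate corollary of \Cref{lem:random-sequence-good} and \Cref{lem:triv-bound}''; your write-up is precisely the intended bookkeeping (convert the probability bound to a count of good blocks in $W^*$, multiply the per-block bounds from \eqref{eq:prod-of-norm-bound} over all $r t^{\arr}$ blocks, and split the exponent).
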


With \Cref{lem:bias-bound} in hand, we are ready to prove \Cref{thm:main}.
\begin{proof}[Proof of \Cref{thm:main}]
    Consider the concrete parameter setting in terms of some $\eps' > 0$:
    \begin{align*}
        d = \ceil*{\exp \parens*{\sqrt{\log \log \frac{1}{\eps'}}}} \,, \quad
        t = d^2 \,, \quad
        \arr = \floor*{\frac{1}{8} \sqrt{\log \log \frac{1}{\eps'}}} \,, \quad
        r = \left\lceil{ \frac{2\log_d\frac{1}{\eps'}}{\arr t^{\arr}} }\right\rceil \,, \quad
        \eps_0 = \frac{1}{\sqrt{d}} \,, \quad 
    \end{align*}
    so that $\lambda = \max \{ \frac{2}{\sqrt{d}}, \eps_0 \} = \frac{2}{\sqrt{d}}$.
    For simplicity, we will use $o(1)$ to denote any term at most $\frac{O(1)}{\sqrt{\log \log (1/\eps')}}$.
    With this notation, $\frac{1}{\log d}$ and $\frac{1}{\arr}$ are both $o(1)$.

    We first verify that $r \geq 1/o(1)$.
    Note that $\arr t^{\arr} \log d = \arr e^{2\arr \log d} \log d \leq (\log \frac{1}{\eps'})^{1/4} \cdot O(\log \log \frac{1}{\eps'})$.
    Thus, we have that $r = \ceil*{\frac{2\log (1/\eps')}{\arr t^{\arr} \log d}} \geq 1/o(1)$.
    Next, observe that $r$ is set such that $\ell = \arr t^{\arr} r = 2 \log_d(1/\eps') \cdot (1+o(1))$ (the error term is because of the ceiling), thus $d^{-\frac{1}{2}\ell} = (\eps')^{1-o(1)}$.
    
    From \cref{lem:bias-bound}, we have that for all $x \in \calC_0 \setminus \{\mathbf{1}\}$,
    \begin{align*}
        \bias(f(x))
        &\le \lambda^{ \ell \cdot \parens*{1 - \frac{\arr^2}{t^{1/4}} - \frac{1}{\arr} } } \cdot 2^{\ell\cdot\frac{\arr+1}{\arr}}
        \le 2^{2\ell} \cdot d^{-\frac{1}{2}\ell \parens*{1 - \frac{\arr^2}{t^{1/4}} - \frac{1}{\arr} } } \mper
    \end{align*}
    Both $\frac{\arr^2}{t^{1/4}}$ and $\frac{1}{\arr} \leq o(1)$,
    and $2^{2\ell} = (1/\eps')^{O(1/\log d)} = (\eps')^{-o(1)}$.
    Thus, the bias of our code is $(\eps')^{1-o(1)}$.

    On the other hand, the rate of the code is $\Omega(\eps_0^4) \cdot d^{-\ell}$, where $\Omega(\eps_0^4)$ is the rate of the base code $\calC_0$, and $d^{\ell}$ is the total number of length-$\ell$ walks.
    Then, $\eps_0^4 d^{-\ell} = d^{-\ell(1+\frac{2}{\ell})} = (\eps')^{2+o(1)}$.

    The theorem statement follows by choosing $\eps' = \eps^{1+o(1)}$ for a suitably chosen $o(1)$ term.
    The $o(1)$ in the statement of \Cref{thm:main} is precisely $\frac{O(1)}{\sqrt{\log \log (1/\eps)}}$.
\end{proof}

\subsection{Proof of \texorpdfstring{\Cref{lem:random-sequence-good}}{Lemma~\ref{lem:random-sequence-good}}}

It now remains to prove \Cref{lem:random-sequence-good}.
In service of doing so, we introduce the notion of a \emph{contractive sequence}.
\begin{definition}[Contractive sequence]
    We say that $w = (w_1, \dots, w_{\arr}) \in [t]^\arr$ is a \emph{contractive sequence} if for every $1\le r < s \le \arr$, we have:
    \[
        \abs*{\langle \mathbf{1}, D_x \Pi_{\perp} A_{w_{s-1}} D_x \Pi_{\perp} A_{w_{s-2}}\dots  D_x \Pi_{\perp} A_{w_r} D_x \mathbf{1} \rangle} \le \lambda^{s-r+1}.
    \]
\end{definition}
We remark that there are $s-r$ copies of $A_{w_i}$ in the above equation, and thus the naive bound from second eigenvalue considerations alone is a bound of $\le \lambda^{s-r}$. Contractive sequences shrink more than the naive bound by a factor of $\lambda$.

\begin{observation}
    Every nonempty subsequence of a contractive sequence $w$ is contractive.
\end{observation}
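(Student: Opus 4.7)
The plan is to interpret ``subsequence'' here as a contiguous subword and reduce the contractivity of the subword to that of the original sequence by a trivial shift of indices. Let $w = (w_1, \ldots, w_\arr)$ be a contractive sequence, and let $w' = (w_a, w_{a+1}, \ldots, w_b)$ with $1 \le a \le b \le \arr$ be a nonempty contiguous subword of length $\arr' \coloneqq b - a + 1$. To verify that $w'$ is contractive, I need to check that for every $1 \le r' < s' \le \arr'$,
\[
    \left|\langle \mathbf{1}, D_x \Pi_\perp A_{w'_{s'-1}} D_x \Pi_\perp A_{w'_{s'-2}} D_x \Pi_\perp \cdots A_{w'_{r'}} D_x \mathbf{1}\rangle\right| \le \lambda^{s'-r'+1}.
\]

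The only step is to set $r \coloneqq a + r' - 1$ and $s \coloneqq a + s' - 1$, so that $1 \le r < s \le b \le \arr$ and $w'_i = w_{a+i-1}$ for every $i$. Substituting these identities, the inner product above is literally equal to the one appearing in the contractivity condition of $w$ at the pair $(r, s)$, whose absolute value is bounded by $\lambda^{s-r+1} = \lambda^{s'-r'+1}$, which is exactly the required bound.

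There is no real technical obstacle here: the content of the observation is that the contractivity constraint at $(r,s)$ depends only on the entries $w_r, w_{r+1}, \ldots, w_{s-1}$ lying strictly between the outer $D_x \mathbf{1}$ factors, and hence passes through unchanged to any contiguous subword. I anticipate that the observation will be invoked repeatedly in the proof of \Cref{lem:random-sequence-good} when shortening or splicing contractive subwords while bounding $\norm*{M_w}_{\op}$.
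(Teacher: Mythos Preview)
Your proof is correct, and your interpretation of ``subsequence'' as a contiguous subword is the intended one --- the paper leaves this observation unproved, and the only later uses (in the inductions inside \Cref{lem:1-direction-bound,lem:perp-direction-bound}) are on contiguous tails $(w_{s+1},\dots,w_\arr)$ and $(w_2,\dots,w_\arr)$. The index-shift you wrote down is exactly the one-line argument the paper tacitly assumes.
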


\Cref{lem:random-sequence-good} then readily follows from \Cref{lem:random-seq-contractive,lem:contractive-norm-bound} below.
\begin{lemma}   \label{lem:random-seq-contractive}
    Let $\bw\sim[t]^{\arr}$ be drawn uniformly at random.
    For $t = d^2$, with probability $1 - \frac{\arr^2}{t^{1/4}}$, $\bw$ is a contractive sequence.
\end{lemma}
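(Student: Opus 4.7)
The plan is a union bound over the $\binom{\kappa}{2}$ pairs $(r,s)$, showing that for each pair the contractivity inequality fails with probability at most $\frac{1}{2 t^{1/4}}$; since $\binom{\kappa}{2} \le \kappa^2/2$, this yields the claimed $\kappa^2/t^{1/4}$ failure bound.

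Fix $r<s$, set $m := s-r$, and let $Q := \langle \mathbf{1}, D_x \Pi_\perp A_{w_{s-1}} D_x \Pi_\perp \cdots D_x \Pi_\perp A_{w_r} D_x \mathbf{1}\rangle$ be the inner product appearing in the contractivity condition. Since $A_j \mathbf{1} = \mathbf{1}$, the projection $\Pi_\perp$ commutes with each $A_{w_j}$; shuffling the projections then rewrites $Q$ as
\[
Q \;=\; \langle \Pi_\perp x,\; \tilde A_{w_{s-1}}\, \tilde D_x\, \tilde A_{w_{s-2}}\, \tilde D_x \cdots \tilde D_x\, \tilde A_{w_r}\, \Pi_\perp x\rangle,
\]
where $\tilde A_j := \Pi_\perp A_j \Pi_\perp$ and $\tilde D_x := \Pi_\perp D_x \Pi_\perp$. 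By the near-Ramanujan property, $\|\tilde A_j\|_{\op} \le 2/\sqrt{d} \le \lambda$, and $\|\tilde D_x\|_{\op} \le 1$.

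The main step is to bound $\E_{\bw}[|Q|]$ by averaging over $w_{s-1}$ alone. Condition on $w_r, \ldots, w_{s-2}$, and set $v := \tilde D_x \tilde A_{w_{s-2}} \cdots \tilde A_{w_r} \Pi_\perp x$, so $Q = \langle \Pi_\perp x,\, \tilde A_{w_{s-1}} v\rangle$ and $\|v\| \le \lambda^{m-1}$ pointwise (by the operator-norm bounds above). Choosing $\sigma_k := \mathrm{sign}\,\langle \Pi_\perp x, \tilde A_k v\rangle$, the ``signing trick'' gives
\[
\E_{w_{s-1}}[|Q|] \;=\; \frac{1}{t}\sum_{k=1}^t \bigl|\langle \Pi_\perp x, \tilde A_k v\rangle\bigr| \;=\; \frac{1}{t}\Bigl\langle \Pi_\perp x,\, \Bigl(\sum_k \sigma_k \tilde A_k\Bigr) v\Bigr\rangle \;\le\; \frac{1}{t}\Bigl\|\sum_k \sigma_k \tilde A_k\Bigr\|_{\op} \|v\|.
\]
Since $\sum_k \sigma_k \tilde A_k = \Pi_\perp (\sum_k \sigma_k A_k) \Pi_\perp$, the all-signings near-Ramanujan bound yields $\|\sum_k \sigma_k \tilde A_k\|_{\op} \le 2\sqrt{t/d}$. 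Combining with $\|v\| \le \lambda^{m-1}$ and $2/\sqrt{d} \le \lambda$ gives $\E_{\bw}[|Q|] \le \lambda^m/\sqrt{t}$.

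Markov's inequality then yields $\Pr[|Q| > \lambda^{m+1}] \le \frac{1}{\lambda\sqrt{t}} \le \frac{1}{2\, t^{1/4}}$, using $\lambda\sqrt{t} \ge (2/\sqrt{d}) \cdot d = 2\, t^{1/4}$ (since $t = d^2$), and a union bound over pairs completes the argument. The main conceptual point is that a single averaging step (over $w_{s-1}$) already suffices to save the extra factor of $\lambda$ beyond the pointwise operator-norm bound $\lambda^m$; the remaining $m-1$ graph indices contribute only through operator-norm bounds on $v$. The sole bookkeeping obstacle is the initial rewriting of $Q$, which relies purely on the elementary commutation $\Pi_\perp A_j = A_j \Pi_\perp$.
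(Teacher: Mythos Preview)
Your proof is correct and follows essentially the same approach as the paper: rewrite the inner product using $\Pi_\perp A_j = A_j \Pi_\perp$, apply the signing trick to the single random index $w_{s-1}$ via the all-signings near-Ramanujan property, bound the remaining factor by the individual spectral-gap estimates $\|\tilde A_j\|_{\op}\le\lambda$, and finish with Markov and a union bound over pairs $(r,s)$. The only cosmetic differences are that the paper applies Markov before plugging in the $\lambda^{m-1}$ bound on the inner vector (yielding a per-pair failure probability of $t^{-1/4}$ rather than your slightly sharper $\tfrac{1}{2}t^{-1/4}$), and it phrases the signing trick as a bound on $\E_{\bu}|\langle \mathbf{1}, D_x A_{\bu} y\rangle|$ for $y\perp\mathbf{1}$ rather than via your $\tilde A_j,\tilde D_x$ notation.
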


\begin{lemma}   \label{lem:contractive-norm-bound}
    Let $w$ be a contractive sequence. Then it holds that 
    $\norm{M_w}_{\mathrm{op}} \le 2^{\arr+1} \lambda^{\arr-1}$.
\end{lemma}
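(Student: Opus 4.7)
The plan is to expand $M_w$ as a sum over all possible ``itineraries'' through the $\mathbf{1}$ and $\mathbf{1}^\perp$ subspaces, bounding each term via the contractive sequence hypothesis. Since $A_{w_j}$ preserves both subspaces, we may write $A_{w_j} = \Pi_{\mathbf{1}} + \Pi_\perp A_{w_j} \Pi_\perp$, and I would expand
\[
    M_w = \sum_{S \subseteq [\arr]} M_w^S,
\]
where $M_w^S$ is obtained by substituting $A_{w_j} \to \Pi_{\mathbf{1}}$ for every $j \notin S$ and $A_{w_j} \to \Pi_\perp A_{w_j} \Pi_\perp$ for every $j \in S$. The triangle inequality reduces the task to showing $\|M_w^S\|_{\op} \le \lambda^{\arr - 1}$ for each of the $2^\arr$ subsets $S$.

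For a fixed $S$, I would decompose it into its maximal consecutive runs $[a_1, b_1], \ldots, [a_K, b_K]$ of lengths $l_1, \ldots, l_K$, separated by gaps of lengths $g_0, g_1, \ldots, g_K$ (boundary gaps $g_0, g_K$ may be zero; interior gaps are all at least $1$). In the product $M_w^S$, each gap of length $g \ge 1$ collapses via $(\Pi_{\mathbf{1}} D_x)^g = \bar{x}^{g-1} \Pi_{\mathbf{1}} D_x$, where $\bar{x} \coloneqq \langle \mathbf{1}, x\rangle$ satisfies $|\bar{x}| \le \eps_0 \le \lambda$; and each run contributes the operator $R_k \coloneqq \prod_{j=a_k}^{b_k} \Pi_\perp A_{w_j} \Pi_\perp D_x$, for which the near-Ramanujan property $\|A_{w_j}|_{\mathbf{1}^\perp}\|_{\op} \le \lambda$ yields the trivial bound $\|R_k\|_{\op} \le \lambda^{l_k}$. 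The crucial new estimate is on the return amplitude $\tau_k \coloneqq \langle \mathbf{1}, D_x R_k \mathbf{1}\rangle$: using the commutativity $\Pi_\perp A_{w_j} = A_{w_j} \Pi_\perp$, consecutive $\Pi_\perp$'s inside $R_k$ coalesce so that $\tau_k$ equals exactly the contractive sequence expression for the run $[a_k, b_k]$, yielding $|\tau_k| \le \lambda^{l_k + 1}$.

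The bound on $\|M_w^S\|_{\op}$ then follows by contraction: whenever a run $R_k$ is flanked by nonempty gap blocks on both sides (an \emph{interior} run), the identity $\Pi_{\mathbf{1}} D_x R_k \Pi_{\mathbf{1}} = \tau_k \Pi_{\mathbf{1}}$ collapses $R_k$ to a scalar factor of $|\tau_k| \le \lambda^{l_k + 1}$; for the at most two boundary runs (those touching position $1$ or $\arr$) the trivial bound $\|R_k\|_{\op} \le \lambda^{l_k}$ suffices. A direct exponent count, adding $\lambda^{l_k + 1}$ per interior run, $\lambda^{l_k}$ per boundary run, and $\lambda^{g-1}$ per gap block, shows that the total $\lambda$-exponent is always at least $\arr - 1$, so $\|M_w^S\|_{\op} \le \lambda^{\arr - 1}$, and summing over $S$ yields $\|M_w\|_{\op} \le 2^\arr \lambda^{\arr - 1} \le 2^{\arr + 1} \lambda^{\arr - 1}$. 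The main obstacle is the case analysis: verifying the exponent bookkeeping in all boundary and degenerate cases ($g_0 = 0$, $g_K = 0$, $S = \emptyset$, $S = [\arr]$), and checking that $\tau_k$ matches the contractive sequence expression via the $\Pi_\perp$-commutativity simplification.
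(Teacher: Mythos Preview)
Your proof is correct and takes a genuinely different route from the paper's. The paper argues inductively: it first proves $\|M_w\mathbf{1}\|\le(2\lambda)^{\arr}$ by a recursion that peels off the first position $s$ at which the walk lands back in the $\mathbf{1}$-direction, then bootstraps this to $\|M_w v\|\le 2^{\arr}\lambda^{\arr-1}\|v\|$ for $v\perp\mathbf{1}$ by a one-step recursion, and finally combines the two bounds. You instead perform a full non-recursive expansion of $M_w$ into $2^{\arr}$ terms indexed by $S\subseteq[\arr]$, decompose each $S$ into runs and gaps, and do a direct exponent count. The key observation that makes your approach work---that for an interior run the block $\Pi_{\mathbf{1}}D_x R_k\Pi_{\mathbf{1}}$ collapses to the scalar $\tau_k$, which by commuting $\Pi_\perp$ through $A_{w_j}$ is exactly the contractive-sequence expression with $(r,s)=(a_k,b_k+1)$---is the same mechanism the paper's induction exploits, but you use it all at once rather than one layer at a time.

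What each approach buys: your expansion is more transparent about where every factor of $\lambda$ comes from and in fact yields the slightly sharper constant $2^{\arr}$ rather than $2^{\arr+1}$; the case analysis you flag (checking $g_0=0$, $g_K=0$, $S=\varnothing$, $S=[\arr]$) is genuine but routine, and the exponent identity $\arr-G'+K-B\ge \arr-1$ holds in every case once one notes that each empty boundary gap is matched by a boundary run. The paper's inductive organization is more modular---the bound on $\|M_w\mathbf{1}\|$ is isolated as its own lemma---which can be convenient if one wants to reuse that piece elsewhere, but at the cost of a recursive structure that somewhat obscures the combinatorics.
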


\subsubsection*{Proof of \Cref{lem:random-seq-contractive}: Most sequences are contractive}

\begin{proof}[Proof of \Cref{lem:random-seq-contractive}]
    Note that for any $w\in[t]$, $A_w$ commutes with $\Pi_{\perp}$, and hence the claim is equivalent to proving:
    \[
        \abs*{ \angles*{ \mathbf{1}, D_x A_{w_{s-1}} \Pi_{\perp} D_x A_{w_{s-2}} \Pi_{\perp} \cdots D_x A_{w_r} \Pi_{\perp} D_x \mathbf{1}} } \le \lambda^{s-r+1}\mper
    \]
    We will first prove that for any $y$ such that $\angles*{y,\mathbf{1}} = 0$, we have $\E_{\bu\sim[t]} \abs*{ \angles*{\mathbf{1}, D_x A_{\bu} y} } \le \frac{2}{\sqrt{td}} \norm*{y} $.
    We may prove this by expanding:
    \begin{align*}
        \E_{\bu\sim[t]} \abs*{ \angles*{ \mathbf{1}, D_x A_{\bu} y } } &= \E_{\bu\sim[t]} \sgn\parens*{ \angles*{ \mathbf{1}, D_x A_{\bu} y } } \angles*{ \mathbf{1}, D_x A_{\bu} y } \\
        &= \angles*{ D_x \mathbf{1},\ \E_{\bu\sim[t]} \bracks*{\sgn\parens*{ \angles*{ \mathbf{1}, D_x A_{\bu} y } } A_{\bu}} \cdot y } \\
        &\le \frac{2}{\sqrt{td}} \norm*{y}\mcom
    \end{align*}
    where the last inequality uses the all-signings near-Ramanujan property.
    By Markov's inequality, with probability at least $1-t^{-1/4}$, for $\bu\sim[t]$ we have:
    \[
        \abs*{ \angles*{ \mathbf{1}, D_x A_{\bu} y } } \le \frac{2}{t^{1/4}\sqrt{d}} \norm*{ y } \le \lambda^2 \norm*{y} \mper
    \]
    where the last inequality used the concrete choice of $\lambda$ and $t = d^2$.

    Using the fact that $\norm*{ A_{H_{w_i}}|_{\mathbf{1}^{\perp}} }_{\op} \le \frac{2}{\sqrt{d}} \le \lambda$, with probability at least $1-\frac{1}{t^{1/4}}$:
    \begin{align*}
        \abs*{\angles*{ \mathbf{1}, D_x A_{\bw_{s-1}} \Pi_{\perp} D_x A_{\bw_{s-2}} \Pi_{\perp} \cdots D_x A_{\bw_r} \Pi_{\perp} D_x \mathbf{1} }} &\le \lambda^2 \norm*{D_x A_{w_{s-2}} \Pi_{\perp} \cdots A_{w_r} \Pi_{\perp} D_x \mathbf{1} } \\
        &\le \lambda^2 \cdot \lambda^{s-r-1} \\
        &\le \lambda^{s-r+1}\mper
    \end{align*}
    Taking a union bound over all choices of $r$ and $s$ implies that $\bw$ is a contractive sequence with probability at least $1 - \frac{\arr^2}{t^{1/4}}$.
\end{proof}

\subsubsection*{Proof of \Cref{lem:contractive-norm-bound}: Bounds for operator norm of contractive sequences}

The proof proceeds by separately analyzing the effect of $M_w$ on $\mathbf{1}$ and $\mathbf{1}^{\perp}$.

\begin{lemma}\label{lem:1-direction-bound} 
    For any nonnegative integer $\arr$, and any contractive sequence $w\in[t]^{\arr}$, it holds that
    \[
        \norm{M_w \mathbf{1}} \le (2\lambda)^{\arr}\mper
    \]
\end{lemma}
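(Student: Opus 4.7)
The plan is to expand $M_w\mathbf{1}$ by inserting $I=\Pi_{\mathbf{1}}+\Pi_{\perp}$ in each of the $\arr-1$ gaps between consecutive operators $T_s\coloneqq A_{w_s}D_x$. This produces a sum of $2^{\arr-1}$ terms, indexed by the subset $S=\{s_1<\cdots<s_k\}\subseteq[\arr-1]$ of positions at which I select $\Pi_{\mathbf{1}}$. Each $\Pi_{\mathbf{1}}$ collapses the intermediate state to a scalar multiple of $\mathbf{1}$, so the term for $S$ factors cleanly into $k$ ``closed'' scalar blocks $\beta_1,\ldots,\beta_k$ (each starting and ending at $\mathbf{1}$) followed by one ``open'' tail block producing a vector $u$.

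For closed block $i$, spanning positions $s_i+1$ to $s_{i+1}$ (with $s_0\coloneqq 0$), I will write $\beta_i=\angles*{\mathbf{1},\, T_{s_{i+1}}\Pi_{\perp}T_{s_{i+1}-1}\Pi_{\perp}\cdots\Pi_{\perp}T_{s_i+1}\mathbf{1}}$. Substituting $T_j=A_{w_j}D_x$ and using $A^{\top}\mathbf{1}=\mathbf{1}$ to absorb the outermost $A_{w_{s_{i+1}}}$ into the $\mathbf{1}$ on the left yields precisely the quantity bounded by the contractive-sequence definition with $r=s_i+1$ and $s=s_{i+1}$, giving $|\beta_i|\le\lambda^{s_{i+1}-s_i}$. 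The degenerate case $s_{i+1}=s_i+1$ is not covered by the $r<s$ hypothesis, but there $\beta_i$ reduces to $\angles*{\mathbf{1},x}$, which is bounded by $\eps_0\le\lambda$ since $x\in\calC_0$ has bias at most $\eps_0$.

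For the tail, $u=T_{\arr}\Pi_{\perp}T_{\arr-1}\Pi_{\perp}\cdots\Pi_{\perp}T_{s_k+1}\mathbf{1}$, and I will bound $\norm*{u}\le |\angles*{\mathbf{1},u}|+\norm*{\Pi_{\perp}u}$. The first summand, by the same absorption trick, is again a contractive inner product bounded by $\lambda^{\arr-s_k}$ (using the edge-case bias bound if $s_k=\arr-1$). For the second, I will write $\Pi_{\perp}u=\Pi_{\perp}A_{w_\arr}D_x v$ with $v=\Pi_{\perp}T_{\arr-1}\cdots\Pi_{\perp}T_{s_k+1}\mathbf{1}$, so $\norm*{\Pi_{\perp}u}\le\norm*{\Pi_{\perp}A_{w_\arr}}_{\op}\cdot\norm*{v}\le\lambda\cdot\norm*{v}$. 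Then applying $\norm*{\Pi_{\perp}T_j}_{\op}\le\norm*{A_{H_j}|_{\mathbf{1}^{\perp}}}_{\op}\le 2/\sqrt{d}\le\lambda$ to each of the $\arr-1-s_k$ factors in $v$ gives $\norm*{v}\le\lambda^{\arr-1-s_k}$, hence $\norm*{\Pi_{\perp}u}\le\lambda^{\arr-s_k}$ and $\norm*{u}\le 2\lambda^{\arr-s_k}$.

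Multiplying the bounds across the $k$ closed blocks and the tail, the $\lambda$-exponents telescope to $s_1+(s_2-s_1)+\cdots+(s_k-s_{k-1})+(\arr-s_k)=\arr$, so each term in the expansion has norm at most $2\lambda^{\arr}$. Summing over the $2^{\arr-1}$ subsets $S$ yields $\norm*{M_w\mathbf{1}}\le 2^{\arr}\lambda^{\arr}=(2\lambda)^{\arr}$. The main obstacle is really the bookkeeping: I need to carefully track which $A_{w_j}$ can be absorbed into the left $\mathbf{1}$ and which must stay and contribute an actual $\lambda$ via the spectral gap, and to handle the edge cases $S=\emptyset$ (only a tail block of length $\arr$) and the degenerate single-position blocks where the contractive definition does not directly apply and must be replaced by the bias bound on $x$.
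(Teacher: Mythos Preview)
Your argument is correct and uses the same ingredients as the paper: decompose by inserting $\Pi_{\mathbf{1}}+\Pi_{\perp}$, bound each segment bracketed by $\mathbf{1}$'s via the contractive-sequence inequality (with the bias bound $|\langle\mathbf{1},x\rangle|\le\eps_0\le\lambda$ handling length-one segments), and contract $\Pi_\perp$ runs via the spectral gap. The only difference is organizational: the paper proceeds by strong induction on $\arr$, splitting at the \emph{first} position $s$ where $\Pi_{\mathbf{1}}$ appears and invoking the lemma on the contractive suffix $(w_{s+1},\dots,w_\arr)$, whereas you unroll that recursion into an explicit sum over all $2^{\arr-1}$ subsets $S\subseteq[\arr-1]$ and bound each term directly. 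Both land on $(2\lambda)^\arr$; the paper's induction is slightly more compact, while your full expansion makes the telescoping of the $\lambda$-exponents across blocks completely explicit. (One indexing slip: for consistency with the telescoping sum you write at the end, block $\beta_i$ should span positions $s_{i-1}+1$ to $s_i$, not $s_i+1$ to $s_{i+1}$.)
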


\begin{proof}
    We will prove this lemma by induction on $\arr$. For $\arr = 0$, the statement is clear: $\norm{\mathbf{1}} = 1 = (2\lambda)^0$. Next, for $\arr \ge 1$, by iteratively splitting by projection onto $\mathbf{1}$ and $\mathbf{1}^\perp$ after each step, we obtain the following decomposition:
    \begin{align*}
        \left( \prod_{j=1}^{\arr} A_j D_x \right) \mathbf{1}
        =&~ \left( \prod_{j=1}^{\arr} \Pi_{\perp} A_j D_x \right) \mathbf{1}
        + \sum_{s=1}^{\arr}~ \left( \prod_{j=s+1}^{\arr} A_j D_x \right)\Bigg( \Pi_{\mathbf{1}} A_s D_x \Bigg) \left( \prod_{j=1}^{s-1} \Pi_{\perp} A_j D_x \right) \mathbf{1}\mper
    \end{align*}
    Thus, by triangle inequality, we have that 
    \begin{align*}
        \norm*{\left( \prod_{j=1}^{\arr} A_j D_x \right) \mathbf{1}}
        \le&~ \norm*{\left( \prod_{j=1}^{\arr} \Pi_{\perp} A_j D_x \right) \mathbf{1} } \\
        &+ \sum_{s=1}^{\arr}~ \norm*{\left( \prod_{j=s+1}^{\arr} A_j D_x \right)\Bigg( \Pi_{\mathbf{1}} A_s D_x \Bigg) \left( \prod_{j=1}^{s-1} \Pi_{\perp} A_j D_x \right) \mathbf{1}}\mper \numberthis \label{eqn:1-direction-recursion}
    \end{align*}
    Let us analyze each component of the right hand side of this expression. 
    \begin{itemize}
    \item 
        First we analyze $\norm*{\left( \prod_{j=1}^{\arr} \Pi_{\perp} A_j D_x \right) \mathbf{1}}$.
        For any vector $u$ and for any $j \in [\arr]$, we have that $\Pi_{\perp} A_j D_x u$ is a vector perpendicular to $\mathbf{1}$ and one can bound its norm as follows:
        \[
            \norm{\Pi_{\perp} A_j D_x u} 
            = \norm{A_j \Pi_{\perp} D_x u} 
            \le \lambda \norm{\Pi_{\perp} D_x u}
            \le \lambda \norm{D_x u}
            \le \lambda \norm{u}\mcom
        \]
        where the first equality holds because $\Pi_{\perp}$ and $A_j$ commute since $\mathbf{1}$ is an eigenvector of $A_j$.
        Thus, inductively applying $\Pi_{\perp} A_j D_x$ to $\mathbf{1}$, we have that 
        \[
            \norm*{\left( \prod_{j=1}^{\arr} \Pi_{\perp} A_j D_x \right) \mathbf{1}} 
            \le \lambda^{\arr}\mper
        \]
    \item 
        Next, let us consider $\norm*{\left( \prod_{j=s+1}^{\arr} A_j D_x \right)\Big( \Pi_{\mathbf{1}} A_s D_x \Big) \left( \prod_{j=1}^{s-1} \Pi_{\perp} A_j D_x \right) \mathbf{1}}$ for each $s \in [\arr]$. Since $\Pi_{\mathbf{1}} A_s = \Pi_{\mathbf{1}}$, this is also equal to
        \begin{align*}
            \Bigg\|\left( \prod_{j=s+1}^{\arr} A_j D_x \right) & \Bigg( \Pi_{\mathbf{1}} D_x \Bigg) \left( \prod_{j=1}^{s-1} \Pi_{\perp} A_j D_x \right) \mathbf{1} \Bigg\| \\
            &= \norm*{\left( \prod_{j=s+1}^{\arr} A_j D_x \right) \mathbf{1}} \cdot \abs*{\left\langle \mathbf{1}, D_x \left( \prod_{j=1}^{s-1} \Pi_{\perp} A_j D_x \right) \mathbf{1} \right\rangle} \\
            &\le \norm*{\left( \prod_{j=s+1}^{\arr} A_j D_x \right) \mathbf{1} } \cdot \lambda^s \\
            &\le (2\lambda)^{\arr-s} \cdot \lambda^s\mper
        \end{align*}
        In the above, the penultimate inequality follows from the fact that $w_1, \dots, w_{s-1}$ is a contractive sequence for $s \ge 2$, and by the fact that for $s = 1$, $\abs*{\angles*{ \mathbf{1}, D_x \mathbf{1} }} \le \eps_0 \le \lambda$.
        The final inequality is from strong induction on the subsequence $w_{s+1},\dots,w_{\arr}$ also being a contractive sequence.
    \end{itemize}
    Now, returning to \Cref{eqn:1-direction-recursion}, we get that
    \[
        \norm*{\left( \prod_{j=1}^{\arr} A_j D_x \right) \mathbf{1}} 
        \le \lambda^{\arr} + \sum_{s=1}^{\arr} 2^{\arr-s}\lambda^{\arr}
        = (2\lambda)^{\arr}\mper    \qedhere
    \]
\end{proof}

\begin{lemma}\label{lem:perp-direction-bound}
    For any contractive sequence $w$ and for any $\vee \perp \mathbf{1}$, it holds that
    \[
        \norm*{M_w \vee} \le 2^{\kappa} \lambda^{\kappa-1} \norm{\vee}\mper
    \]
\end{lemma}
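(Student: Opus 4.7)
The plan is to mirror the decomposition strategy used in the proof of \Cref{lem:1-direction-bound}, except starting from $v\in\mathbf{1}^{\perp}$ rather than from $\mathbf{1}$. Writing $A_j$ as shorthand for $A_{w_j}$, I would insert $I=\Pi_{\mathbf{1}}+\Pi_{\perp}$ after each factor $A_j D_x$ and group the resulting terms according to the \emph{first} index $s$ at which the walk returns to the $\mathbf{1}$ direction (if it ever does), obtaining the identity
\begin{align*}
    \left(\prod_{j=1}^{\kappa} A_j D_x\right) v \;=\; \left(\prod_{j=1}^{\kappa}\Pi_{\perp} A_j D_x\right)v \;+\; \sum_{s=1}^{\kappa}\left(\prod_{j=s+1}^{\kappa} A_j D_x\right)\Pi_{\mathbf{1}} A_s D_x \left(\prod_{j=1}^{s-1}\Pi_{\perp} A_j D_x\right)v.
\end{align*}
The triangle inequality then reduces the problem to bounding each summand separately.

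For the first summand (the walk never returns to $\mathbf{1}$), I would use the one-step contraction $\norm*{\Pi_{\perp} A_j D_x u} \le \lambda \norm*{u}$---valid because $A_j$ commutes with $\Pi_{\perp}$ and has operator norm at most $\lambda$ on $\mathbf{1}^{\perp}$---iterated $\kappa$ times starting from $v\in\mathbf{1}^{\perp}$, to get a bound of $\lambda^{\kappa}\norm*{v}$.

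For the $s$-th summand, I would use $\Pi_{\mathbf{1}} A_s = \Pi_{\mathbf{1}}$ and $\Pi_{\mathbf{1}} D_x z = \langle \mathbf{1}, D_x z\rangle\,\mathbf{1}$ to write its norm as a product
\begin{align*}
    \abs*{\left\langle \mathbf{1},\, D_x\!\left(\prod_{j=1}^{s-1}\Pi_{\perp} A_j D_x\right)\!v \right\rangle} \cdot \norm*{\left(\prod_{j=s+1}^{\kappa} A_j D_x\right)\mathbf{1}}.
\end{align*}
Cauchy--Schwarz combined with the same iterative $\lambda$-contraction bounds the scalar factor by $\lambda^{s-1}\norm*{v}$, and \Cref{lem:1-direction-bound} applied to the subsequence $w_{s+1},\ldots,w_{\kappa}$ (contractive by the subsequence observation) bounds the vector factor by $(2\lambda)^{\kappa-s}$. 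Multiplying gives $2^{\kappa-s}\lambda^{\kappa-1}\norm*{v}$.

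Summing all contributions yields $\lambda^{\kappa}\norm*{v}+\sum_{s=1}^{\kappa}2^{\kappa-s}\lambda^{\kappa-1}\norm*{v}\le 2^{\kappa}\lambda^{\kappa-1}\norm*{v}$, using $\lambda\le 1$ to absorb the first term. The only real subtlety, compared with the $\mathbf{1}$-starting analysis of \Cref{lem:1-direction-bound}, lies in the innermost inner product: since $v\in\mathbf{1}^{\perp}$ rather than $v=\mathbf{1}$, the bound $\abs*{\langle\mathbf{1}, D_x \mathbf{1}\rangle}\le\eps_0\le\lambda$ coming from the bias of the base code is unavailable, and Cauchy--Schwarz only gives $\abs*{\langle\mathbf{1}, D_x v\rangle}\le\norm*{v}$. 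This loses precisely one factor of $\lambda$ relative to \Cref{lem:1-direction-bound}, which accounts for the $\lambda^{\kappa-1}$ (rather than $\lambda^{\kappa}$) exponent in the conclusion.
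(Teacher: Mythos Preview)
Your proposal is correct. Your decomposition mirrors the one the paper uses for \Cref{lem:1-direction-bound}: you unfold the whole product at once, group terms by the first step $s$ at which mass returns to $\mathbf{1}$, bound the prefix by the iterated $\lambda$-contraction and the suffix by \Cref{lem:1-direction-bound}, and sum the resulting geometric series. The paper's own proof of \Cref{lem:perp-direction-bound} is organized slightly differently: it peels off a single step, splits $A_{w_1} D_x v$ into its $\mathbf{1}$ and $\mathbf{1}^{\perp}$ components, applies \Cref{lem:1-direction-bound} to the former and inducts on the length-$(\kappa-1)$ contractive subsequence for the latter. Unrolling that induction reproduces exactly your first-return-time sum, so the two arguments are the same computation presented recursively versus iteratively; neither buys anything the other does not.
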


\begin{proof}
    We will prove this claim by induction on $\arr$.
    For the base case of $\arr=0$, the statement follows from $\norm{\vee} \le \frac{1}{\lambda} \cdot \norm{\vee}$.

    For $\arr \ge 1$, let $\vee_1 = \Pi_{\mathbf{1}} A_1 D_x \vee$ and $\vee' = \Pi_{\perp} A_1 D_x \vee$. These satisfy that $\norm{\vee_1} \le \norm{\vee}$ and $\norm{\vee'} = \norm{\Pi_{\perp} A_1 D_x \vee} = \norm{A_1 \Pi_{\perp} D_x \vee} \le \lambda \norm{\vee}$, where we use that $A_1$ and $\Pi_{\perp}$ commute since $\mathbf{1}$ is an eigenspace of $A_1$.
    Then,
    \begin{align*}
        \norm*{ \left(\prod_{j=1}^{\arr} A_j D_x \right) \vee}
        &\le \norm*{ \left(\prod_{j=2}^{\arr} A_j D_x \right) \vee_1}
        + \norm*{ \left(\prod_{j=2}^{\arr} A_j D_x \right) \vee'} \\
        &\le (2\lambda)^{\arr-1} \norm{v_1} + 2^{\arr-1} \lambda^{\arr-2} \norm{v'} \\
        &\le (2\lambda)^{\arr-1} \norm{v} + 2^{\arr-1} \lambda^{\arr-2} \cdot \lambda \norm{v} \\
        &= 2^{\arr} \lambda^{\arr-1} \norm{v},
    \end{align*}
    where we use in the second inequality \Cref{lem:1-direction-bound} for the contractive subsequence $(w_2, \dots, w_{\arr})$ to bound $\norm*{ \left(\prod_{j=2}^{\arr} A_j D_x \right) \vee_1}$ and the inductive hypothesis for $(w_2, \dots, w_{\arr})$ to bound $\norm*{ \left(\prod_{j=2}^a A_j D_x \right) \vee'}$. 
\end{proof}

\begin{proof}[Proof of \Cref{lem:contractive-norm-bound}]
    Let $\vee$ be any vector in $\R^{n_0}$, and let $\vee_1 = \Pi_{\mathbf{1}} \vee$ and $\vee' = \Pi_{\perp} \vee$.
    Combining the bounds from \Cref{lem:1-direction-bound,lem:perp-direction-bound}, we obtain that
    \begin{align*}
        \norm*{M_w \vee} &
        \le \norm*{M_w \vee_1} 
        + \norm*{M_w \vee'}
        \le (2\lambda)^\arr \norm{\vee_1} + 2^\arr \lambda^{\arr-1} \norm{\vee'}
        \le 2^{\arr+1} \lambda^{\arr-1} \norm{\vee}\mper \qedhere
    \end{align*}
\end{proof}

\section{Free Expanders}
\label{sec:free-expanders}

In our construction of $\eps$-balanced codes, we use explicit constructions of ``free expanders'' given by O'Donnell and Wu \cite{OW20} that satisfy the \emph{all-signings near-Ramanujan property} (\Cref{def:all-signings-expanders}).
However, their construction is significantly more general.
In this section, we describe their result, highlight some additional properties and show several applications of their result.

To describe their construction, we will need the notion of a \emph{noncommutative polynomial} and a \emph{free group}.
Let $\calW_k$ denote the set of all words of length $\le k$ on self-adjoint indeterminates $Y_1,\dots,Y_D$, i.e. $Y_i=Y_i^*$.
A noncommutative polynomial of degree $\le k$ on these indeterminates can generally be written as:
\[
    P(Y_1,\dots,Y_D) = \sum_{W\in\calW_k} \wh{P}_W W\mcom
\]
for $\wh{P}_W\in\R$.
We say that a polynomial is \emph{self-adjoint} is $P^* = P$, and define the \emph{Frobenius norm} of a polynomial as $\norm*{P}_F = \sqrt{ \sum_{W\in\calW_k} |\wh{P}_W|^2 }$.

Let $F_D$ be the \emph{free group} on $D$ self-adjoint generators $x_1,\dots,x_D$ such that $x_i^2 = 1$.
We identify elements of the free group via their \emph{adjacency operators}---formally, we identify $g\in F_D$ as a linear operator on $\ell_2(F_D) \coloneqq \braces*{ f:F_D\to\R \mid \sum_{g\in F_D} f(g)^2 < \infty }$ where $gf(u) = f(gu)$.

A key result of interest from \cite{OW20} is on the existence of an infinite family of finite matching matrices that spectrally approximate the free group for \emph{all} low-degree polynomials.
\begin{theorem}[{\cite[Special case of Theorem 10.13]{OW20}}]   \label{thm:OW-main}
    For every $D, k \ge 1$, and $r,\eps > 0$, there is an algorithm that takes in a positive integer input $N$, and in $\poly(N)$ time outputs perfect matching matrices $X_1,\dots,X_D$ of dimension $N'\times N'$ where $N' = N + o(N)$, such that simultaneously for every degree-$\le k$ self-adjoint polynomial $P(Y_1,\dots,Y_D)$ such that $\norm*{P}_F \le r$, we have:
    \[
        \spec\parens*{p(X_1,\dots,X_D)\big|_{\mathbf{1}^{\perp}}} \subseteq \spec\parens*{p(x_1,\dots,x_D)} \pm [-\eps,\eps]\mper
    \]
\end{theorem}

The result of \cite{OW20} also applies to polynomials with \emph{matrix-valued} coefficients, which is relevant to constructing expanding lifts of fixed constant-sized expander graphs.

\begin{remark}[Near-Ramanujan graphs]
    Observe that choosing $P(X_1,\dots,X_D) = \sum_{i=1}^D X_i$ yields a $D$-regular graph whose spectrum approximates that of $\sum_{i=1}^D x_i$, which is the adjacency operator of the $D$-regular infinite tree whose spectrum is $[-2\sqrt{D-1}, 2\sqrt{D-1}]$.
    Therefore, the construction yields a near-Ramanujan graph.
\end{remark}

\begin{remark}[All-signings near-Ramanujan family]
    We may obtain \Cref{thm:OW} as a corollary of \Cref{thm:OW-main} by choosing $D = dt$, and restricting our attention to the collection of signed linear combinations of $X_1,\dots,X_D$.
    The graphs $H_1,\dots,H_t$ can be chosen by picking $H_i \coloneqq X_{d(i-1)+1} + \dots + X_{di}$.
    The spectrum of any signed linear combination of $x_1,\dots,x_D$ is $[-2\sqrt{D-1}, 2\sqrt{D-1}]$.
\end{remark}

\subsection{Lossless Expanders on Average}

We say that a $d$-regular graph $G = (V, E)$ is a lossless expander if every subset $S \subseteq V$ of size at most $\eta |V|$ (where $\eta$ depends only on $d$) has at least $(1-\eps) d |S|$ distinct neighbors, with arbitrarily small $\eps = \eps(d) \to 0$ as $d \to \infty$.
The explicit construction of constant-degree lossless expanders has been a long-standing open question \cite{HLW06}, and this was recently resolved in a line of work \cite{HMMP24,HLMOZ25,HLMRZ25}.

Using \cite{OW20}, we can construct a collection of $d$-regular graphs $H_1,\dots,H_t$ on the same vertex set that are ``lossless expanders on average'', meaning that any small subset $S \subseteq V$ has lossless expansion in most graphs among $H_1,\dots,H_t$.
The specific property we need is stated below, which follows from \Cref{thm:OW-main} and \cite[{\S}2.2, Eq. (10)]{EFS17}.

\begin{theorem} \label{thm:OW-2-step-walk}
    For every integer $t \geq 1$ and $d \geq 2$, there is an explicit construction of $d$-regular graphs $H_1, \dots, H_t$ on the same vertex set such that
    \begin{align*}
        \max_{\sigma \in \{\pm1\}^t} \norm*{\frac{1}{t} \sum_{j=1}^t \sigma_j \parens*{A_j^2 - \frac{1}{d} \Id}^{\perp} }_{\op}
        \leq
        O\parens*{\frac{1}{\sqrt{t} \cdot d}} \mcom
    \end{align*}
    where $A_j = \frac{1}{d} \Adj(H_j)$ is the normalized adjacency matrix of $H_j$, and $A_j^\perp$ is the matrix with the $\1$ eigenspace removed.
\end{theorem}

Intuitively, $A_j^2$ is the transition matrix of $2$-step walks on $H_j$, which has a $\frac{1}{d}$ probability of going back to the starting vertex.
Thus, $A_j^2 - \frac{1}{d} \Id$ is the transition matrix of $2$-step \emph{non-backtracking} walks on $H_j$, and \Cref{thm:OW-2-step-walk} can be viewed as the all-signings near-Ramanujan property of such walks.

With \Cref{thm:OW-2-step-walk}, we show the following.

\begin{theorem}[Lossless expanders on average]
    Let $d,t \in \N$.
    There is an explicit construction of $d$-regular graphs $H_1,\dots,H_t$ on vertex set $V$ that satisfy the following:
    let $\eps,\eta,\gamma \in (0,1)$ such that $\eta \leq O(\gamma \eps /d)$ and $t \geq \Omega(\frac{1}{\gamma^2 \eps^2})$, then for any subset $S \subseteq V$ of size $|S| \leq \eta |V|$, at least $1-\gamma$ fraction of $j\in [t]$ has $|N_j(S)| \geq (1-\eps) d|S|$.
\end{theorem}

\begin{proof}
    We will show that the graphs $H_1,\dots,H_t$ from \Cref{thm:OW-2-step-walk} satisfy lossless expansion on average.
    Fix a subset $S \subseteq V$ of size at most $\eta n$.
    Consider the quadratic form $\1_S^\top A_j^2 \1_S$, where $\1_S$ is the 0-1 indicator vector of $S$.
    Since $A_j$ is the normalized adjacency matrix of $H_j$, $\1_S^\top A_j^2 \1_S$ equals $\frac{1}{d^2}$ times the number of length-$2$ walks that start and end in $S$.
    Since a length-$2$ walk can be described by choosing the middle vertex $v$, and picking two (possibly identical) neighbors of $v$ in $S$, we may write the number of length-$2$ walks that start and end in $S$ as $\sum_{v\in N_j(S)} d_{j,v}^2$, where $d_{j,v} = |E_{H_j}(v, S)|$.
    Thus, we may write
    \begin{align*}
        \1_S^\top A_j^2 \1_S = \frac{1}{d^2} \sum_{v \in N_j(S)} d_{j,v}^2 = \frac{|N_j(S)|}{d^2} \cdot \E_{v\in N_j(S)}[d_{j,v}^2] \ge \frac{|N_j(S)|}{d^2} \cdot \E_{v\sim N_j(S)}[d_{j,v}]^2 \mcom 
    \end{align*}
    where $N_j(S)$ refers to the neighborhood of $S$ in $H_j$.
    Observe that $\sum_{v \in N_j(S)} d_{j,v}$ is equal to the number of edges from $S$ to $N_j(S)$, which is equal to $d|S|$.
    Thus, $\E_{v\sim N_j(S)}[d_{j,v}] = \frac{d|S|}{|N_j(S)|}$, and we obtain
    \begin{align*}
        \1_S^\top A_j^2 \1_S \geq \frac{|N_j(S)|}{d^2} \cdot \frac{d^2 |S|^2}{|N_j(S)|^2}
        = \frac{|S|^2}{|N_j(S)|} \mper 
    \end{align*}
    On the other hand, writing $\1_S = \frac{|S|}{n} \1 + \1_S^\perp$ where $\1_S^\perp$ is orthogonal to $\1$, we have $\1_S^\top A_j^2 \1_S = \frac{|S|^2}{n} + \1_S^\top (A_j^2)^\perp \1_S$.
    Then,
    \begin{align*}
        \frac{1}{t} \sum_{j=1}^t \abs*{\1_S^\top \parens*{A_j^2 - \frac{1}{d} \Id} \1_S} 
        &\leq \frac{|S|^2}{n} + \max_{\sigma \in \{\pm1\}^t} \frac{1}{t} \sum_{j=1}^t \sigma_j \cdot \1_S^\top \parens*{A_j^2 - \frac{1}{d} }^{\perp} \1_S \\
        &\leq \eta |S| + |S| \cdot \max_{\sigma \in \{\pm1\}^t} \norm*{\frac{1}{t} \sum_{j=1}^t \sigma_j \parens*{A_j^2 - \frac{1}{d} \Id}^{\perp} }_{\op} \\
        &\leq |S| \cdot \parens*{\eta + O\parens*{\frac{1}{\sqrt{t} d}}} \mcom
    \end{align*}
    where the last inequality follows from the guarantee of \Cref{thm:OW-2-step-walk}.
    By Markov's inequality, at least $1-\gamma$ fraction of $j \in [t]$ satisfies
    \begin{align*}
        \abs*{\1_S^\top \parens*{A_j^2 - \frac{1}{d} \Id} \1_S} \leq \frac{|S|}{d} \cdot \parens*{\frac{\eta d}{\gamma} + O\parens*{\frac{1}{\gamma \sqrt{t}}}}
        \leq \frac{\eps |S|}{d} \mcom
    \end{align*}
    as long as $\eta \leq O(\gamma \eps/d)$ and $t \geq \Omega(\frac{1}{\gamma^2 \eps^2})$.
    Combined with the lower bound $\1_S^\top A_j^2 \1_S \geq \frac{|S|^2}{|N_j(S)|}$, we have that $1-\gamma$ fraction of $j\in [t]$ satisfies
    \begin{equation*}
        \frac{|S|^2}{|N_j(S)|} \leq \frac{|S|}{d} (1+\eps) \implies |N_j(S)| \geq (1-\eps) d|S| \mper
        \qedhere
    \end{equation*}
\end{proof}

\subsection{Explicit Rotating Expanders}

One of the sources of suboptimality in our construction as well as previous work \cite{TaS17,CohenC25} is the factor of $2$ per step on a Ramanujan graph.
If one were to obtain an explicit construction of $\eps$-balanced codes achieving the GV bound without an $\eps^{o(1)}$ overhead in the rate based on expander walks, it would be necessary to circumvent the fact that the operator norm of the adjacency matrix of any Ramanujan graph has spectral radius $\approx\frac{2}{\sqrt{d}}$, as opposed to $\frac{1}{\sqrt{d}}$.

In the context of mixing of random walks, it has been observed that this factor of $2$ can be shaved if one performs a nonbacktracking walk on a Ramanujan graph \cite{ABLS07}.
Cohen and Maor \cite{CM23} used the method of interlacing polynomials and finite free probability (cf.\ \cite{MSS18}) to exhibit the existence of ``rotating expanders'':
a sequence of $d$-regular (one-sided) Ramanujan graphs $G_1,\dots,G_{\ell}$ such that the second eigenvalue of $G_1\cdot G_2 \cdots G_{\ell} \cdot G_{\ell} \cdot G_{\ell-1} \cdots G_1$ is at most $O(\ell) \cdot d^{-\ell}$, as opposed to $2^{2\ell} d^{-\ell}$.
Moreover, they can guarantee that each $G_i$ is a relabeling of $G_1$.

We observe below that an explicit sequence of expanders satisfying this property can be obtained from \Cref{thm:OW-main}.
In particular, by combining \Cref{thm:OW-main} and a result of Haagerup \cite[Lemma 1.4]{Haa78}, we have the following.
\begin{theorem} \label{thm:rotating-expanders}
    Fix $d, \ell \in \N$.
    There is an explicit construction of $d$-regular graphs $H_1,\dots,H_t$ on the same vertex set such that each $H_j$ satisfies $\|A_{H_j}^\perp\|_{\op} \leq \frac{2}{\sqrt{d}}$, and moreover,
    \begin{align*}
        \norm*{\parens*{A_{H_\ell} A_{H_{\ell-1}} \cdots A_{H_1}}^\perp }_{\op} \leq O(\ell) \cdot d^{-\ell/2} \mper
    \end{align*}
    Further, it is possible to realize $H_i$ as $P_i H_1 P_i^{\top}$ for a permutation matrix $P_i$.
\end{theorem}

Note that the naive bound in this case is $2^{\ell} d^{-\ell/2}$, and this would be tight if all the graphs are identical.
On the other hand, $A_{H_\ell} A_{H_{\ell-1}} \cdots A_{H_1}$ is the transition matrix of $\ell$-step walks on $d$-regular graphs, which can be viewed as the transition matrix of a $d^\ell$-regular graph.
If this graph were Ramanujan, we would get a bound of $2\cdot d^{-\ell/2}$.

In the context of $\eps$-balanced codes,
it remains unclear how the factor of $2^{\ell}$ can be removed in the setting of \Cref{eq:bias-formula}, where one inserts a diagonal matrix $D_x$ between consecutive $A_{H_i}$.
However, we hope that this phenomenon will one day lead to improved analysis of such expander walks, yielding explicit $\eps$-balanced codes closer to the GV bound.



\bibliographystyle{alpha}
\bibliography{refs}

\end{document}